\newtheorem{theorem}{Theorem}[section]
\newtheorem{lemma}[theorem]{Lemma}
\theoremstyle{definition}
\newtheorem{definition}{Definition}[section]
\newtheorem{assumption}[theorem]{Assumption}
\title{Power of human--algorithm collaboration in solving combinatorial optimization problems}
\date{July 2021}
\begin{document}

\maketitle

\begin{center}

\author{Tapani Toivonen \\ \href{mailto:tapani.toivonen@uef.fi}{tapani.toivonen@uef.fi} \\
University of Eastern Finland, School of computing}

    
\end{center}

\begin{abstract}
Many combinatorial optimization problems are often considered intractable to solve exactly or by approximation. An example of such problem is \textit{maximum clique} which -- under standard assumptions in complexity theory -- cannot be solved in sub-exponential time or be approximated within polynomial factor efficiently. We show that if a \textit{polynomial time} algorithm can query informative Gaussian priors from an expert $poly(n)$ times, then a class of combinatorial optimization problems can be solved efficiently in \textit{expectation} up to a multiplicative factor $\epsilon$ where $\epsilon$ is arbitrary constant. While our proposed methods are merely theoretical, they cast new light on how to approach solving these problems that have been usually considered intractable.
\end{abstract}

\section{Introduction}
Human-in-the-loop (HITL) AI has gained quite a lot of attention during the past years \cite{1}. HITL AI has emerged to compete with autonomous systems in fields where the interference of a human user is required to solve certain hard problems that might be intractable for an autonomous AI. HITL AI aims to find solutions to problems by involving human to learning process of an AI model. Such interaction can be, for instance, model adjusting, hyperparameter tuning or data processing. As a new competitor of autonomous AI, human--algorithm collaboration has shown promising results in educational data mining and learning analytics \cite{2}. Other promising fields of human--algorithm collaboration include human-in-the-loop optimization \cite{3} and human--robot interaction \cite{4}.

Solving hard combinatorial optimization problems such as \textit{maximum satisfiability} \cite{5}, \textit{maximum clique} \cite{6} or \textit{minimum vertex cover} \cite{7} is usually considered intractable. Consensus in computational complexity theory is that these problems cannot have efficient algorithms that would always yield correct results \cite{8}. That is, no polynomial time algorithms to solve them exist. This further implies that some problems, such as finding the maximum sized cliques or independent sets in graphs, cannot even be approximated well and efficiently in the same time \cite{9} if widely assumed super polynomial lower bounds for the problems hold. 

Solving such hard problems in general has not gained much attention in human--algorithm collaboration theory as main focus has been in AI applications \cite{10} such as learning \cite{11}. In this paper we show that, through human--algorithm collaboration, a combinatorial minimization or maximization problem that has combinatorial complexity of $O(2^n)$ (or as a matter of fact, any $O(poly(n)^{poly(n)})$, can be solved efficiently up to an arbitrary multiplicative factor $\epsilon$ \textit{if} the algorithm can query \textit{Gaussian priors} from a human expert during the execution. Combinatorial complexity means the size of the problem's search space. For instance, in maximum clique, the combinatorial complexity is $O(2^n)$ but in travelling salesman problem, the combinatorial complexity is $O(n!)$. Note that \textit{any} NP optimization problem can be reduced into, say, clique problem, which \textit{has} combinatorial complexity of $O(2^n)$. We further show that the scope of possible problems solvable with the proposed method includes even problems with super-exponential search spaces.

Bayesian optimization is a framework to solve hard function problems and is based on Bayesian statistics. Usually, before the actual Bayesian optimization, an \textit{expert} provides algorithm some information about the problem that is being optimized in a form of a \textit{Gaussian prior}. Gaussian prior reflects one's understanding on shape and smoothness of the objective function. Querying Gaussian priors from human experts is very usual assumption in Bayesian optimization literature \cite{12}. In \textit{fully} Bayesian approach to AI and global optimization, the optimization procedure based on Gaussian processes expects the Gaussian prior $G(\mu, \Sigma)$ as a part of input \cite{13}. That is, the prior is decided by a human expert. What is more, in fully Bayesian case, the prior is expected to be correct. I.e. the function being optimized is a realization of the given Gaussian prior (in Figure \ref{fig:wiener}, 3 realizations of Wiener processes, for instance). In this paper, we make similar assumptions. We expect that the priors are \textit{consistent} and \textit{informative}. These assumptions are quire realistic and very usual, because the expert has access to previous function evaluations, previous problem instances, and multiple prior sampling heuristics such as Maximum Likehood estimation.

This all being said, we stress that we do not argue to solve any notoriously hard problem in theoretical computer science. Instead we give pointers that human--algorithm collaboration can help us solve even  problems considered intractable in the future as the field of human--algorithm collaboration matures.

This paper is organized as follows. First we introduce the concept of Bayesian optimization. Second we show how to reduce a class of combinatorial optimization problem instances to a \textit{univariate} finite domain function, which then can be approximated by our human--algorithm collaboration Bayesian optimization algorithm that we introduce thereafter. Finally we conclude our research and give some future directions.

\begin{figure}
    \centering
    \includegraphics[width=\textwidth]{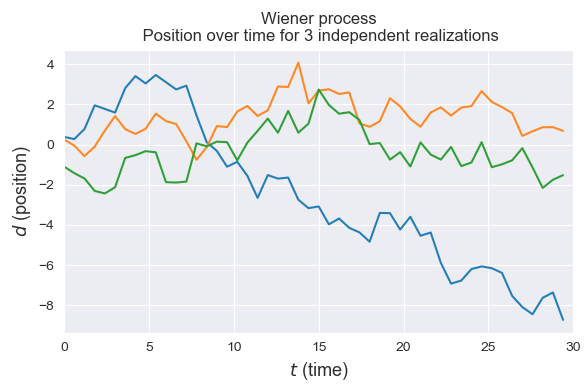}
    \caption{3 realizations of Wiener process}
    \label{fig:wiener}
\end{figure}

\section{Bayesian optimization}
Global optimization (GO) aims to find optimal value(s) of functions called \textit{objective functions} either in finite or bounded domains \cite{14}. The optimal values, depending on the context, can be the global maximal or global minimal values. In general, global optimization is intractable -- the number of function evaluations increases exponentially in the problem dimensions and exponentially in domain size \cite{15}. The GO problems for continuous functions and functions with finite domains respectively are

\begin{equation}
     \textrm{maximize} \; F(x), x \in [0, 1]^d, d \in \mathbb{N}
\end{equation}

\begin{equation}
    \textrm{maximize} \; F(x), x \in N, N \in \mathbb{N}
\end{equation}

Note that any minimization problem can be reduced into a maximization problem by $F := -F$.

Bayesian optimization (BO) is a technique used in GO to search globally optimal values in continuous, combinatorial, and discrete domains \cite{a7}. BO has wide range of applications and during the recent years, its usage in optimizing hard black box functions has increased \cite{a8}. BO is often used in low data regimes where where evaluation of the objective function is costly or not otherwise efficient \cite{a9}. These problems include, robot navigation and planning \cite{a10}, tuning the hyperparameters of deep learning models \cite{a11}, predicting earthquake intensities \cite{a12}, finding optimal stock values in stock markets \cite{a13} and much much more \cite{a14}.

The advantage of the BO is that BO can optimize any set of black box functions. That is, if one can only access function values and not, say, gradient information, then BO can be very efficient \cite{a8}. Also, BO does not usually make any assumptions on the function it optimizes unlike multiple state-of-the-art optimization algorithms \cite{a7}. These algorithms expect the objective function to be Lipschitz continuous \cite{a16}, unimodal \cite{a17} or having a certain shape near to global optimizer \cite{a18}. \textit{Fully} Bayesian optimization, however, assumes that the user has some knowledge or expectation on the shape of the objective function, which realizes as (Gaussian) \textit{prior}. The main difference between BO and traditional GO is that in the BO, one is not assumed to provide, say, differentiable function -- one just has to know to some extent what kind of the function is.

Bayesian optimization is mainly based on Gaussion processes \cite{refpaper}. Gaussian process is a stochastic process such that every finite collection of random variables has a multivariate normal distribution. Gaussian process is completely defined by its convariance function $\Sigma$ (or covariance matrix in finite domains) and its mean function $\mu$ \cite{a7}. That is, $G(\mu, \Sigma)$.

In this paper it is assumed that for \textit{any} function $F$ considered

\begin{equation}
    F \sim G(\mu, \Sigma)
\end{equation}

, and that $\mu$ is some constant, say 0 for centered Gaussian process. This is very usual assumption made in BO literature.

In BO, one places a (Gaussian) \textit{prior} on the unknown and possible non-convex function which reflects one's understanding of the function -- whether the function is continuous, differentiable, smooth, unimodal and so forth \cite{a7}. It is usually assumed that $\mu$ is 0 and hence, the prior is defined only by its covariance function (or covariance matrix) $\Sigma$. After the prior is set over the unknown function, an algorithm suggests points where the optimal values would lie. Based on these suggestions and function evaluations, the algorithm updates the prior to \textit{posterior} and uses the posterior the suggest even better points where the possible optimum would be located. In many settings, the BO finds the global optimum much more faster than, say, brute force search \cite{a8}. The suggestions and how they are derived from the prior and posterior are defined by an acquisition function \cite{a7}. Multiple different possibilities for acquisition function exists. These include upper confidence bound \cite{a20}, expected improvement \cite{a21}, and Thompson sampling \cite{a22} to name few. A visualization of BO with posterior distribution of functions, sample points, and an acquisition function can be seen in Figure \ref{fig:BO}.

\begin{figure}
    \centering
    \includegraphics[width=0.7\textwidth]{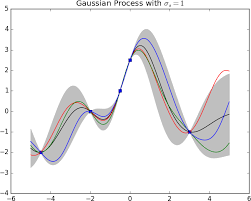}
    \caption{Bayesian optimization}
    \label{fig:BO}
\end{figure}

Multiple studies have shown that Bayesian optimization converges to global optimum in reasonable time \cite{a7} and \cite{a8}. \cite{a14} showed exponential convergence rate of Bayesian optimization to expected global optimum (that is, the expected highest or lowest value of the function depending on the context) if the function near to global optimum has a certain shape. \cite{16} showed similar results to more restricted version of Bayesian optimization where the function is a realization of a Gaussian process called Brownian motion (also known as Wiener process).

While it is tempting to praise Bayesian optimization for its ability to find optimal values of the structures efficiently, one of the downsides of the approach is that the locally optimal values found and globally optimal values are based on how well the prior of the Gaussian process is defined. If the black box function is a realization of the prior, then BO works surprisingly well. On the other hand, a poor choice of a prior can result the algorithm to not converge at all. In Bayesian optimization literature \cite{a7} and \cite{a8}, it is usually assumed that an expert, who has a domain knowledge in the context of the objective function provides a prior from which the objective function is realization.

The \textit{regret} \cite{a9}, which is used to give an idea of the convergence rate of a BO algorithm, is based on the expectations. That is, ratio between \textit{expected} global optimum and \textit{expected} best value found the function. The regret is defined as

\begin{equation}
    r_T = \mathbf{E}[F_{sup}] - \mathbf{E}[Y]
\end{equation}

, where $F_{sup}$ is the global optimum and $Y$ is the best value found by a BO optimization algorithm. The expectation in the regret means that on average among all functions, the Bayesian optimization algorithm will return a value that is $r_T$ close to an expected optimum. The expected values are bounded by the prior and posterior distributions of Gaussian process. In \cite{refpaper} the regret was defined as a multiplicative factor of expected best value found and expected global optima

\begin{equation}
    r_T := normregret = \frac{\mathbf{E}[F_{sup}] - \mathbf{E}[Y]}{\mathbf{E}[F_{sup}]}
\end{equation}

, which is closely related to an approximation ratio in approximation. In this paper, we complement the previous results of \cite{refpaper} to obtain (super-)exponential convergence rate for BO. We restrict ourselves to the objective functions that can be \textit{changed} while preserving the relative scale of the function values in the original objective function. We show that any combinatorial optimization problem with combinatorial complexity of $O(poly(n)^{poly(n)})$ can be reduced to such function. Later, we show how to derive the (super-)exponential convergence rate with our human--algorithm collaboration procedure.

\section{Combinatorial problems as univariate finite domain functions}

In this section, we show that any problem with a \textit{combinatorial} complexity of $O(2^n)$ can be reduced to a \textit{univariate} function with a domain of size $\Omega (2^n)$. The reduction is quite general and can be applied also for problems with combinatorial complexity of $O(poly(n)^{poly(n)})$: one just has to consider different branching factor $m$.

The reduction algorithm (seen in Algorithm 1) assumes that the problem instance can be viewed as a decision tree (see Figure 2) where a left child arc of a decision node implies that the decision variable at the node is assigned with 0 and right child arc implies 1 assignment (or vice versa). The function that the reduction produces, takes value in a domain $D = [D_0, D_1]$ and uses the value to find an assignment for the combinatorial problem instance. Finally the function evaluates the combinatorial problem with the assignment derived from the input value.

The algorithm starts by ordering the decision variables randomly. Second the algorithm creates an objective function that accepts a value from a finite domain $D = [D_0, D_1]$. Based on the value passed to the function, the function always divides the remaining domain in the two equally sized halves and selects the half that contains the value. At each selection of the half, the assignment is updated with the corresponding decision variable value based on which of the halves was selected. When the domain has been completely split so that no more intervals can be selected, the original combinatorial problem instance is evaluated with the assignment and the value of the evaluation (number of vertices in a clique, number of clauses satisfied, or number of vertices in a vertex-cover and so forth) is scaled with the scaling parameter and eventually returned. The algorithm can be seen in Algorithm \ref{algo:1} and the geometric presentation of the combinatorial problem in Figure \ref{fig:dpll}.

If Algorithm \ref{algo:1} is called with a partial assignment (that is, some variables have already been assigned with values), then the algorithm starts by randomly ordering the variables that have not been assigned to any values. Then the algorithm produces the function for the remaining variables and uses the partial assignment as a basis of the final evaluation.

\begin{algorithm}
\caption{Combinatorial problem to finite domain univariate function}
\label{Sample}
\begin{algorithmic}[1]
\STATE \textbf{input: X (problem instance), V (variables), scale (scale of function values), P (partial assignment), $D_0$, $D_1$}
\STATE randomly sort V (if P contains values then discard assigned variables from V)
\STATE create black-box function with parameters x (point to evaluate), $[D_0, D_1]$:
\STATE current := $[D_0, D_1]$
\STATE $i$ := 0
\WHILE{current $\neq$ single value}
    \STATE a := first \textit{half} of current
    \STATE b := second \textit{half} of current
    \STATE current := a \textbf{if} x $\in$ a \textbf{else} b
    \STATE $V_i$ := \{1 or 0\} based on current (whether a or b was selected)
    \STATE $i$ := $i$ + 1
\ENDWHILE
\STATE assignment := evaluate X with assignment of V and P
\STATE y := value of assignment (value of maximization / minimization problem)
\STATE scale y based on \textit{scale}
\STATE \textbf{return} black box function (lines 3 - 15)
\end{algorithmic}
\label{algo:1}
\end{algorithm}

\subsection{Analysis of Algorithm \ref{algo:1}}

In a combinatorial problem, whose combinatorial complexity is $O(2^n)$, there exist $2^n$ different variable assignments. Algorithm \ref{algo:1} runs $O(n)$ steps and at each step, the algorithm splits the remaining domain in two halves and selects another half as a new domain. The assignment for the combinatorial problem is updated based on the order of the variables, the depth at which the algorithm currently operates and whether the value of input $x$ belongs to first or the second half of the split interval (other half indicates, say, 0 value and other half indicates 1 value). This way, the algorithm assigns each value of input $x \in D, D = [D_0, D_1]$ ($[1, 2^n]$, for instance) with a different assignment, which is used to evaluate the combinatorial problem at the end of the algorithm. Scaling parameter is used for scaling the returned value if the new scale is required -- in order to produce functions from a specific Gaussian process prior. Scaling factor can also be used to group certain optimization results for the same function values. Say, 1 or 2 clauses satisfied by an assignment yields function value 1.

Random ordering of the variables from the same combinatorial optimization problem instance produces always same function values that lie in the different positions in the function domain --- even if some variables are fixed as in partial assignment's case.

\begin{figure}
    \centering

\tikzstyle{level 1}=[level distance=2.5cm, sibling distance=3.5cm]
\tikzstyle{level 2}=[level distance=2.5cm, sibling distance=2cm]

\tikzstyle{bag} = [text width=4em, text centered]
\tikzstyle{end} = [circle, minimum width=3pt,fill, inner sep=0pt]

\tikzset{
  mynode/.style={fill,circle,inner sep=1pt,outer sep=0pt}
}

\begin{tikzpicture}[grow=down, sloped]
\node[bag] {$v_1$}
    child {
        node[bag] {$v_2$}        
            child {
                node[end, label=right:
                    {}] {}
                edge from parent
                node[above] {$-$}
            }
            child {
                node[end, label=right:
                    {}] {}
                edge from parent
                node[above] {$+$}
            }
            edge from parent 
            node[above] {$-$}
    }
    child {
        node[bag] {$v_2$}        
        child {
                node[end, label=right:
                    {}] {}
                edge from parent
                node[above] {$-$}
            }
            child {
                node[end, label=right:
                    {}] {}
                edge from parent
                node[above] {$+$}
            }
        edge from parent         
            node[above] {$+$}
    };
    \node[draw,text width=6cm] at (0,1) {
        V = $(v_1, v_2, ..., v_n)$
    };
    
    \draw[black,thick,-] (-3,-6) -- (3,-6)
    node[pos=0,mynode,fill=black,label=above:\textcolor{black}{1}]{}
    node[pos=1,mynode,fill=black,text=green,label=above:\textcolor{black}{$2^n$}]{};

\end{tikzpicture} 
    \caption{Geometric binary tree ($m = 2$) presentation of a combinatorial problem}
    \label{fig:dpll}
\end{figure}
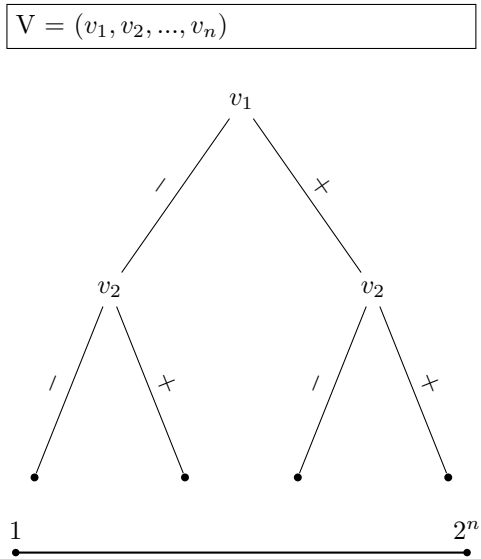

\section{Human--algorithm collaboration in Bayesian optimization}

In this section we refer to a recent paper of \cite{refpaper} where the authors showed that any ratio between expected global optimum and expected optimum found by their Bayesian optimization (BO) algorithms (UCB2 or EI2) in functions with finite domains. We complement their results to show that for a specific type of problems, their results can be extended to a polynomial time human--algorithm collaboration procedure. We stress that the algorithm is merely theoretical and is required to query Gaussian priors from an external expert multiple times.

The key ingredient of their work that we use is that -- instead of standard simple regret used in Bayesian optimization -- they provided proof for expected approximation ratio -- the multiplicative ratio between the expected best value found by the function and the expected global optimum (normregret in equation 5). Also, their bound of the regret ratio is tight up an arbitrary constant. That is, their UCB2 or EI2 are optimal to the worst case bound. This means that the upper bound equals to lower bound in their algorithms up to some constant $\epsilon$. Furthermore, their regret ratio is not asymptotic -- which is crucial in our analysis. The upper and lower bounds for their algorithms are respectively

\begin{equation}
        normregret \leq 1 - (1 - T^{\frac{1}{2\pi}}) \cdot \frac{\sqrt{\log_2(T) - \log_2(3\log_2^{\frac{3}{2}}(T))}}{\sqrt{\log_2(N)}}
    \end{equation}

and

\begin{equation}
    normregret \geq 1 - \frac{\sqrt{\log_2(T)}}{\sqrt{\log_2(N)}} - \epsilon
\end{equation}

, where $T$ is the number of function evaluations, $N$ is the domain size, and $\epsilon$ is arbitrary constant depending on $T$ ($\epsilon$ tends to approach 0 as $T$ increases).

As such, their work cannot be extended to polynomial time algorithm to achieve a constant factor approximation in polynomial time due to their bounds are only \textit{expectations} (as in standard BO). Hence, to obtain the strict ratio provided by their algorithm, one would have to evaluate arbitrary number of different functions. We overcome this constraint by narrowing down the scope of the functions to type of functions where the function can be changed almost arbitrary many times but still depend on the same problem instance (see Algorithm \ref{algo:1}).

Without a loss of generality, we assume, once again, that the combinatorial problems here have combinatorial complexity of $O(2^n)$. In Algorithm \ref{algo:1} we showed that any combinatorial problem with combinatorial complexity of $O(2^n)$, can be reduced to a black-box univariate function with a finite domain. By running the Algorithm 1 multiple times for a single instance with different scaling factors, the algorithm produces different functions from the same problem instance. This is because the algorithm always randomly sorts either all or a subset of the variables (line 2 in Algorithm \ref{algo:1}). We use that and \textit{McDiarmid’s inequality} \cite{mcd} to obtain with high probability a convergence to expected values. McDiarmid’s inequality is defined as

\begin{equation}
    |F(x_1, x_2, ..., x_m) - F(x_1, x_2, ..., x_i', ..., x_m)| \leq c_i
\end{equation}

, where a value change of a single random variable can cause a function value to change at most $c_i$, which resembles Lipschitz condition. 

and the actual inequality

\begin{equation}
    P[|F(x_1, x_2, ..., x_m) - E[F(x_1, x_2, ..., x_m)]| \geq t] \leq exp(\frac{-2t^2}{\Sigma_{i=0}^m(c_i^2)})
\end{equation}



, where $t$ is some constant for the difference between function of random variables and expected value of the function with random variable. McDiarmid’s inequality is a concentration inequality and states that the value of a function for random variables converges to its expected value exponentially in the number of random variables if a change in one of it's parameters changes the function value only by a samll fraction ($c_i$). In our case, we run our reduction algorithm (Algorithm 1) $S$ times, use either UCB2 or EI2 \cite{refpaper} each time and finally output the mean of the found values by UCB2 or EI2. By McDiarmid’s inequality, the mean value converges exponentially fast to the bounds promised by UCB2 or EI2. We then branch to the optimal regions of the domain where the optimal function values are likely to reside. Hence, our algorithm follows quite popular geometric branch and bound framework used in multiple global optimization algorithms \cite{16} and \cite{18}. 

In \cite{16}, the search space is shrunk after function evaluations and the intervals where the global optimizer is not likely to reside are discarded. \cite{16} and \cite{18} cannot be used as such to solve problems with exponential sized domains because constants in their algorithm depend on the domain size and the shape of the function near to optimizer(s).

Because the prior given to the Gaussian process upper bounds the expected optimum and the expected best value found by UCB2 and EI2, at every expansion and new sample from Algorithm \ref{algo:1}, we query a new prior from an external expert. We assume that the priors queried from the expert do not contradict each others and are informative. That is, there are no conflicting choices of, say, hyperparameters, among the priors regarding the same fractions of the search space, and that the expert can gain insights of the functions. 

Querying informative priors from an expert is realistic assumption -- and usually made in BO literature -- since the expert has an access to multiple sampling heuristics, such as Maximum likehood estimation and previous function evaluations and \textit{domain knowledge}. This differentiates querying priors from an expert to standard \textit{oracle} queries in theory of computation where the oracle is unrealistic non-deterministic entity who is always correct. Our extension to \cite{refpaper} algorithm can be seen in Algorithm \ref{algo:2}.

Before going into details we introduce some definitions and an assumption.

\begin{definition}[Cell]
    Interval produced when a domain or a part of the domain is divided into two halves. One division produces two new cells.
\end{definition}

\begin{definition}[Expansion of cell]
    Division of a cell in two intervals equal size.
\end{definition}

\begin{definition}[Upper bound]
    Expected maximum of a cell. Upper bound is calculated from the size of a cell, mean maximum found by EI2 or UCB2 from samples of Algorithm \ref{algo:1}, and number of iterations used by EI2 or UCB2 ($X$): 
    
    \begin{equation}
        ub := \frac{val}{\frac{\sqrt{\log_2(T)}}{\sqrt{\log_2(N)}}}
    \end{equation}
    
    , where $val$ is the \textit{mean} of values found by EI2 or UCB2 in S runs from a cell, $T = X$, and $N$ is the size of the cell. Ub is calculated for each cell as there were no approximation error.
    
\end{definition}

\begin{definition}[$\epsilon$ optimal solution]
    An expected local optimum $\epsilon$ close to a \textit{expected} global optimum.
\end{definition}

\begin{definition}[Optimal cell]
    Cell that contains $\epsilon$ optimal solution.
\end{definition}

\begin{assumption}
    The expert's priors do not contradict each others at any point in the same fractions of the search space, and priors are \textit{informative}.
\end{assumption}

\begin{algorithm}
\caption{Bayesian optimization with expert knowledge}
\label{build}
\begin{algorithmic}[1]

\STATE \textbf{inputs: D (domain), $T$, $S$, $X$, $V$}

\STATE current := $[1, D]$

\STATE $t := 0$
\STATE $i := 0$
\STATE $j := 0$

\STATE fix order of $Vars := V$

\WHILE{$t \leq \log_2(T)$}

    \STATE a := first \textit{half} of current
    \STATE b := second \textit{half} of current
    \STATE remove current cell
    \STATE add a to cells
    \STATE add b to cells
    
    \STATE $s := 0$
    
    \WHILE{$s \leq \log_2(S)$}
        
        \STATE re-sample a and b from Algorithm 1, use absolute position of a and b (in original domain) to derive a partial assignment for Algorithm \ref{algo:1} and $D_0, D_1$.
        \STATE derive covariance matrices ($\Sigma_i, \Sigma_j$), and means ($\mu_i, \mu_j$) for a and b using query to \textbf{expert}
        \STATE solve UCB2 or EI2 for a, use $T := X$  iterations, covariance matrix derived $\Sigma_i$, and $\mu_i$
        \STATE solve UCB2 or EI2 for b, use $T := X$  iterations, covariance matrix derived $\Sigma_j$, and $\mu_j$
        
        \STATE $s := s + 1$
        \STATE $i := i + 1$
        \STATE $j := j + 1$
        
    \ENDWHILE
    
    \STATE retain original order of variables for $a$ from $Vars$
    \STATE retain original order of variables for $b$ from $Vars$
    
    \STATE deduce $ub$ of $a$ from its found UCB2 or EI2 values in S samples from Algorithm 1, size of $a$, and X
    \STATE deduce $ub$ of $b$ from its found UCB2 or EI2 values in S samples from Algorithm 1, size of $b$, and X
    
    \STATE current := \textit{argmax} of cells (the cell with highest $ub$)
    
    \STATE $t := t + 1$
    
\ENDWHILE
\STATE \textbf{return} max value of any cell found by UCB2 or EI2
\end{algorithmic}
\label{algo:2}
\end{algorithm}

In Algorithm \ref{algo:2} we show the pseudocode of the algorithm. If the priors given by the expert are informative and consistent, then Algorithm \ref{algo:2} converges to $\epsilon$ optimal solution in $O(S \cdot A \cdot K \cdot C \cdot \log_2(D))$ for some $C$ depending on the problem instance, $S$ and $X$; and $K$ (the number of function values $\epsilon$ close to global optima). This is because the Algorithm 2 with high probability expands the $\epsilon$ optimal cell in two equal sized halves -- the rate of convergence is exponential. In the following the prove our claims.

\begin{lemma}
     With probability in $S$, Algorithm 2 expands optimal cell at any time $t_i \in T, 0 \leq i \leq T$
\end{lemma}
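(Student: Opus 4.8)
The plan is to show that the quantity $ub$ computed for each cell is, with probability tending to $1$ as $S$ grows, a faithful estimate of the expected maximum of that cell, and that the optimal cell's estimate dominates all others so that the \textit{argmax} step on line 29 selects it for expansion. First I would fix a single outer iteration $t_i$ and isolate the source of randomness: across the $S$ re-samplings in the inner loop, the only thing that changes is the random ordering of the variables produced on line 2 of Algorithm \ref{algo:1}. Each ordering yields a different univariate function built from the \emph{same} problem instance, hence a different value returned by UCB2 or EI2, while the underlying expected optimum of the cell is invariant. The mean of these $S$ values is precisely the quantity $val$ entering the definition of $ub$ in equation (10).

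Second I would apply McDiarmid's inequality (equation 9) to $val$ viewed as a function of the $S$ independent orderings $x_1,\dots,x_S$. Since $val$ is an average of $S$ bounded terms and resampling a single ordering alters only one summand, the bounded-difference constant satisfies $c_i = O(1/S)$; substituting $\sum_i c_i^2 = O(1/S)$ into equation (9) gives $P[\,|val - \mathbf{E}[Y]| \geq t\,] \leq \exp(-\Omega(S t^2))$, so $val$ concentrates exponentially fast around the expected best value $\mathbf{E}[Y]$ promised by UCB2 or EI2 on that cell.

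Third I would convert this concentration into a statement about $ub$. By the tight, non-asymptotic regret characterization (equations 6 and 7), we have $\mathbf{E}[Y] = \mathbf{E}[F_{sup}]\,(1 - normregret)$, where $1 - normregret$ equals $\sqrt{\log_2(T)}/\sqrt{\log_2(N)}$ up to the additive constant $\epsilon$. Dividing the concentrated $val$ by exactly this factor, which is the definition of $ub$, therefore recovers $\mathbf{E}[F_{sup}]$, the expected maximum of the cell, up to an error controlled by $t$ and $\epsilon$. Because the optimal cell contains an $\epsilon$ optimal solution, its expected maximum exceeds that of every competing cell by a fixed margin; choosing $S$ large enough that the McDiarmid deviation $t$ falls below this margin, a union bound over the two freshly created cells shows the optimal cell attains the strict \textit{argmax} of $ub$. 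A further union bound over the at most $\log_2(T)$ outer iterations then yields the claim simultaneously for every $t_i$, with failure probability decaying like $\exp(-\Omega(S))$.

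The main obstacle I expect is this third step: closing the gap between the empirical estimate and the true expected maximum while keeping the $\epsilon$ slack from the lower bound (equation 7) strictly below the separation between the optimal cell and its nearest competitor. This requires Assumption 1, so that the expert's informative and consistent priors make $ub$ comparable across cells of differing sizes $N$, and it relies on the bounded-difference constant $c_i$ genuinely scaling like $1/S$ — which in turn depends on the relative-scale-preservation property of the functions produced by Algorithm \ref{algo:1}.
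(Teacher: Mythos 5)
Your proposal follows essentially the same route as the paper's proof: apply McDiarmid's inequality to the mean of the $S$ UCB2/EI2 values (with bounded differences coming from resampling a single ordering in Algorithm 1), then use the tightness of the non-asymptotic upper and lower regret bounds from equations (6) and (7) to argue that the optimal cell's $ub$ dominates every competitor whose expected optimum falls outside the $\epsilon$ tolerance. The only differences are cosmetic -- you make the union bound over cells and iterations explicit and write the concentration exponent as $\exp(-\Omega(St^2))$, while the paper packages the domination condition into its explicit inequality (18) -- so this matches the paper's argument.
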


\begin{proof}
    By McDiarmid’s inequality,
    
    \begin{equation}
        |F(x_1, x_2, ..., x_m) - F(x_1, x_2, ..., x_i', ..., x_m)| \leq c_i
    \end{equation}

    and

    \begin{equation}
        P[|F(x_1, x_2, ..., x_m) - E[F(x_1, x_2, ..., x_m)]| \geq t] \leq exp(\frac{-2t^2}{\Sigma_{i=0}^m(c_i^2)})
    \end{equation}
    
    , the Algorithm 2 can be modified so that it passes all $S$ random permutations from Algorithm 1 to a black box function that optimizes permutations with EI2 or UCB2 and outputs the mean of the results. Hence, $c_i$ is bounded by $\frac{1}{m} \Sigma_{i=0}^m (b - a)$ where $b$ and $a$ are upper bound and lower bound of the functions values respectively. We have hence,
    
    \begin{equation}
        P[|F(x_1, x_2, ..., x_m) - E[F(x_1, x_2, ..., x_m)]| \geq t] \leq exp(\frac{-2m^2t^2}{(b - a)^2})
    \end{equation}
    
    , This implies that the value of $S$ UCB2 or EI2 runs converges exponentially fast to its expected value in $m$ -- the number of random permutations. 
    
    The expected approximation ratio in \cite{refpaper} is given as
    
    \begin{equation}
        1 - (1 - T^{\frac{1}{2\pi}}) \cdot \frac{\sqrt{\log_2(T) - \log_2(3\log_2^{\frac{3}{2}}(T))}}{\sqrt{\log_2(N)}}
    \end{equation}
    
    , which is,
    
    \begin{equation}
        \Omega((1 - T^{\frac{1}{2\pi}}) \cdot \frac{\sqrt{\log_2(T) - \log_2(3\log_2^{\frac{3}{2}}(T))}}{\sqrt{\log_2(N)}})
    \end{equation}
    
    This is upper bounded by
    
    \begin{equation}
        o(\frac{\sqrt{\log_2(T)}}{\sqrt{\log_2(N)}})
    \end{equation}
    
    By McDiarmid’s inequality, the mean approximation ratio tends in between these bounds in $m := S$.
    
    The cell with the highest upper bound is always expanded. This, the upper, and the lower bound imply that the cell that does not contain the expected global optimizer might be expanded instead of the cell with the optimizer -- some cell might have lower approximation error and only slightly smaller expected optimum, and hence, larger $ub$.
    
    This is limited by
    
    \begin{equation}
        E[F_{sup}] \cdot (1 - T^{\frac{1}{2\pi}}) \cdot \frac{\sqrt{\log_2(T) - \log_2(3\log_2^{\frac{3}{2}}(T))}}{\sqrt{\log_2(N)}} > E[F_{local}] \cdot \frac{\sqrt{\log_2(T)}}{\sqrt{\log_2(N)}}
    \end{equation}
    
    , which is
    
    \begin{equation}
         E[F_{sup}] > E[F_{local}] \cdot \frac{\sqrt[2\pi]{T} \cdot \sqrt{\log_2(T)}}{(\sqrt[2\pi]{T} - 1)\sqrt{\log_2(T)-\log_2(3\log_2^{\frac{3}{2}}(T))}} + \epsilon
    \end{equation}
    
    , where $F_{local}$ is a maximum of a cell other than expected global optima, and $\epsilon$ is a small constant from McDiarmid’s inequality's convergence error. Equation (18) gives a limit for $\epsilon$ optimal solution.
    
    This implies that even a cell without any approximation error, if the found value is less than (18) factor from expected global maximum, then the cell will not be expanded because it will be \textit{dominated} by at least the cell with the expected global optimizer.
    
    These prove that -- with probability in $m := S$ -- only optimal cells will be expanded.
    
\end{proof}

\begin{lemma}
    Algorithm \ref{algo:2} will, with very high probability, find at least $\epsilon$ optimal solution to any optimization problem derived from some Gaussian process and Algorithm \ref{algo:1}.
\end{lemma}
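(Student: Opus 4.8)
The plan is to lift the single-step guarantee of Lemma 1 to a guarantee over the entire run of Algorithm \ref{algo:2} by a union bound, and then to argue that once the optimal cell has been halved enough times its size is small enough that UCB2 or EI2 returns an $\epsilon$ optimal value on it. First I would fix the high-probability event of Lemma 1: for a single expansion, McDiarmid's inequality guarantees that the mean over the $S$ runs of Algorithm \ref{algo:1} is within $t$ of its expectation except with probability at most $\exp(-2S^2 t^2/(b-a)^2)$, so that with probability increasing in $S$ the cell selected as the \textit{argmax} of $ub$ is an optimal cell in the sense of equation (18), i.e.\ it contains an $\epsilon$ optimal solution.

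Next I would set up the loop invariant that the \emph{current} cell always contains an $\epsilon$ optimal solution. At the start this holds because current is the whole domain $[1,D]$. For the inductive step, each iteration splits current into halves $a$ and $b$; by Lemma 1 the cell attaining the global maximum $ub$ is optimal, and the best-first selection makes it the new current, so the $\epsilon$ optimal solution is never pruned. Because every one of the outer iterations halves the cell carrying the optimum, and the number of such iterations is $\log_2(T)=\log_2(D)=n$ when $T=D=2^n$ — hence polynomial — the optimal cell shrinks by the factor $2^{\log_2 T}=D$ down to a single point after the loop; UCB2 or EI2 then evaluates that $O(1)$-point cell essentially exhaustively and returns its $\epsilon$ optimal value. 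If several cells share a near-optimal value (the parameter $K$), the same argument tracks whichever currently dominates, and since all of them carry the same $\epsilon$ optimal value, reaching any one suffices.

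Finally I would control the total failure probability. The outer loop runs $\log_2(T)=O(\mathrm{poly}(n))$ times, so a union bound over all expansions bounds the total failure probability by $\log_2(T)\cdot\exp(-2S^2 t^2/(b-a)^2)$; choosing $S$ large enough (logarithmic in the number of steps suffices) makes this arbitrarily small, which furnishes the ``very high probability'' in the statement. Since every step preserves the invariant on this event, the returned value comes from a cell guaranteed to hold an $\epsilon$ optimal solution, which is exactly the claim.

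The main obstacle I expect is the chaining of the per-step events into a global guarantee. The upper bound $ub$ is only an \emph{estimated} quantity, and a non-optimal cell with smaller approximation error can momentarily dominate the truly optimal cell, so I must use the dominance inequality (18) to show that \emph{any} cell ever attaining the maximum $ub$ still contains an $\epsilon$ optimal value, and that this holds \emph{simultaneously} across all $\log_2(T)$ steps. Reconciling the expectation-based regret bounds of \cite{refpaper} with the high-probability concentration from McDiarmid's inequality — so that the per-cell \emph{expected} optimum may be treated as essentially realized once $S$ is large — is the delicate part of making the union bound rigorous.
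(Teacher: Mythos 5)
Your proposal is correct and follows essentially the same route as the paper: both arguments take the per-expansion guarantee from the preceding lemma (McDiarmid concentration of the $S$-sample mean, so the optimal cell wins the $ub$ comparison with probability $1-\exp(-2S^2t^2/(b-a)^2)$) and chain it across the $O(\log_2 D)$ depth levels, concluding that a modest $S$ suffices for the whole run to succeed. The only cosmetic difference is that you aggregate the per-step failures by a union bound while the paper writes the success probability as the product $\bigl(1-e^{-2m^2t^2/(b-a)^2}\bigr)^h$; your explicit loop invariant and your remark about reconciling the expectation-based regret bounds with the high-probability concentration are, if anything, more careful than the paper's own treatment.
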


\begin{proof}
    When running Algorithm 2 sufficiently many iterations, the values found by UCB2 and EI2 increase, as cells' size decrease \cite{refpaper}, and tend to cell's exåected optimum.
     
     The probability $x$ of selecting \textit{always} optimal cell is propositional to By McDiarmid’s inequality, $m := S$, but decreases exponentially to a depth of the tree and tend to 0:
    
    \begin{equation}
        P[x|h] := (1 - (e^{-\frac{2m^2t^2}{(b - a)^2}}))^h
    \end{equation}
    
    , where $h$ is the maximum depth of the tree. On the other hand because
    
    \begin{equation}
        P[x] := 1 - (e^{-\frac{2m^2t^2}{(b - a)^2}}))
    \end{equation}
    
    $P[x]$ increases exponentially in $m$ and tend to 1, the expected required samples from Algorithm 1 to obtain $P[x|h] > 0.5$ is bounded polynomially in $m := O(log_2(D)) = O(h)$.
\end{proof}

\begin{lemma}
    Expected run time of Algorithm 2 to find at least $\epsilon$ optimal solution is $O(S \cdot A \cdot K \cdot C \cdot log_2(D))$ (the time complexity of UCB2 or EI2 omitted here) if the function has $K$ number of $\epsilon$ optimal solutions, where $D$ is the domain size, $A$ is the number of variables in the combinatorial optimization problem, and $C$ depends on function values and $X$.
\end{lemma}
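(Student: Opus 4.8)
The plan is to treat Algorithm~\ref{algo:2} as a branch-and-bound search over a binary tree, multiply the number of expansions by the work done per expansion, and then fold the probabilistic guarantees of the two preceding lemmas into a bounded expected number of repetitions. First I would bound the tree depth: each pass of the outer loop replaces the current cell by one of its two halves, so from a domain of size $D$ a cell shrinks to a single value (a complete variable assignment) after $O(\log_2 D)$ expansions. By the first lemma, with probability increasing in $S$ the algorithm expands only optimal cells, so with high probability it descends a single root-to-leaf path toward an $\epsilon$ optimal cell, fixing the number of main-path expansions at $O(\log_2 D)$.

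Next I would charge the work performed at a single expansion. The inner loop re-samples from Algorithm~\ref{algo:1} $O(S)$ times, and each such call costs $O(A)$ since Algorithm~\ref{algo:1} makes one pass over the at-most-$A$ decision variables to build an assignment. The expert queries for $\Sigma_i,\mu_i,\Sigma_j,\mu_j$, the two upper-bound computations, and the fixed number $X$ of UCB2 or EI2 iterations contribute only a factor depending on the function values and $X$, which I would absorb into the constant $C$ (the internal cost of UCB2/EI2 being explicitly omitted from the statement). A single expansion therefore costs $O(S\cdot A\cdot C)$.

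To reach the stated bound I would then account for $K$. When the objective has $K$ distinct $\epsilon$ optimal solutions, the domination inequality (18) of the first lemma must hold against up to $K$ competing near-optimal cells, so the success estimate $P[x\mid h]$ of the second lemma loses a factor $K$ through a union bound; keeping $P[x\mid h]>\frac{1}{2}$ then costs a multiplicative $K$. Multiplying the three counts gives expected work $O(\log_2 D)\cdot O(S\cdot A\cdot K\cdot C)=O(S\cdot A\cdot K\cdot C\cdot\log_2 D)$. Because each expansion succeeds only with high probability, the expected run time is a convergent geometric-type series in the per-level failure probability; by the second lemma this probability drops below a constant once $m:=S$ is polynomial in $\log_2 D$, so the series contributes only a constant factor that I would again absorb into $C$, leaving the claimed bound.

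The hard part will be making the factor $K$ rigorous. Algorithm~\ref{algo:2} keeps a single \emph{current} cell --- the argmax of the upper bounds --- so it is not transparent that it ever hedges against $K$ optimal cells; the cleanest route is to derive the $K$ from the union bound lost by the McDiarmid concentration underlying both preceding lemmas, rather than claiming the tree literally branches $K$ ways. Tying that union bound back to the expected-runtime series, and reconciling the outer-loop bound $\log_2 T$ with the $\log_2 D$ depth demanded by the statement, is the delicate accounting I expect to dominate the proof.
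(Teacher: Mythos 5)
Your proposal follows essentially the same route as the paper's proof: depth $O(\log_2 D)$ from repeated halving, the preceding lemma guaranteeing high-probability expansion of the optimal cell, and per-expansion work of $O(S\cdot A)$ with $C$ absorbing the cost of distinguishing near-optimal values via $X$. In fact your accounting is more explicit than the paper's --- the paper's proof simply asserts the $A$ and $K$ factors without the union-bound or geometric-series arguments you sketch, so the ``delicate accounting'' you flag as the hard part is precisely what the published proof leaves undone.
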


\begin{proof}
    By lemma 4.3, the algorithm always expands the optimal cell with high probability in $S$. Because the expansion procedure divides the cell always in two equally sized halves and because $\epsilon$ optimal cell is expanded with high probability, the convergence rate is $O(\frac{1}{2^N})$. Fix $N = log_2(D)$, we have $O(S \cdot A \cdot K \cdot C \cdot log_2(D))$. $C$ depends on when the Algorithm \ref{algo:2} is able to distinguish the expected global optimum from similar values of the objective function. This further depends on parameter $X$ for UCB2 or EI2. This completes our proof.
\end{proof}

Of course, the approach provided here is merely theoretical. First assumption is that all functions are indeed realization of the Gaussian processes (which might not be the case) and secondly, as the domain size increases, the more priors (even in the best possible case) the expert has to provide, which is not feasible even for even moderate sized problems in real life. The other disadvantage of the algorithm is that some covariance functions, UCB2 or EI2 may produce better results and have less strict bounds because the bounds are general and not prior dependent. This of course, leads to a situation where for some priors, the Algorithm 1 spends more time in searching for sub-optimal regions of the search space. A sample run of the search space shrinking can be seen in Figure \ref{fig:dpll2}. The term $\epsilon$ is very very small. For instance, fixing $T := X = 10^7$ yields less than $2\%$ approximation error for large domains. In equation (18) we showed that the approximation error in propositional only to $X$.

\begin{figure}
    \centering

\tikzstyle{level 1}=[level distance=2cm, sibling distance=2.8cm]
\tikzstyle{level 2}=[level distance=2cm, sibling distance=1.4cm]
\tikzstyle{level 3}=[level distance=1.2cm, sibling distance=0.9cm]

\tikzstyle{bag} = [text width=4em, text centered]
\tikzstyle{end} = [circle, minimum width=3pt,fill, inner sep=0pt]

\tikzset{
  mynode/.style={fill,circle,inner sep=1pt,outer sep=0pt}
}

\begin{tikzpicture}[grow=down, sloped]
\node[bag] {$v_1$}
    child {
        node[bag] {$v_2$}        
            child {
                node[] {$v_3$}
                edge from parent
                node[above] {$-$}
            }
            child {
                node[] {$v_3$}
                child {
                node[] {$v_4$}
                edge from parent
                node[above] {$-$}
            }
            child {
                node[] {$v_4$}
                edge from parent
                node[above] {$+$}
            }
                edge from parent
                node[above] {$+$}
            }
            edge from parent 
            node[above] {$-$}
    }
    child {
        node[bag] {$v_2$}
        edge from parent         
            node[above] {$+$}
    };
    \node[draw,text width=6cm] at (0,1) {
        V = $(v_1, v_2, ..., v_n)$
    };
    
    \draw[black,thick,-] (-3,-6) -- (3,-6)
    node[pos=0,mynode,fill=black,label=above:\textcolor{black}{1}]{}
    node[pos=1,mynode,fill=black,text=green,label=above:\textcolor{black}{$2^n$}]{};
    
     \draw[black,dotted,-] (0,0) -- (0,-6);

     \draw[black,dotted,-] (-1.4,0) -- (-1.4,-6);
     \draw[black,dotted,-] (3,0) -- (3,-6);
     \draw[black,dotted,-] (-3,0) -- (-3,-6);
     \draw[black,dotted,-] (-0.7,0) -- (-0.7,-6);
     
\end{tikzpicture} 
    \caption{Algorithm \ref{algo:2} expands the cell with the highest upper bound ($-v_1$, $+v_2$) and evaluates the upper bounds of cells $(-v_1, +v_2, +v_3)$ and $(-v_1, +v_2, -v_3)$ (upper bounds of cells $(+v_1$), and $(-v_1, -v_2)$ are known)}
    \label{fig:dpll2}
\end{figure}
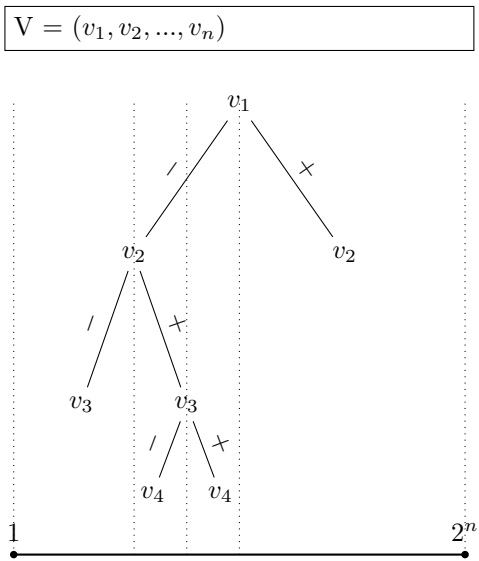

\subsection{Wiener processes revisited}
Wiener process $W$ is a stochastic Gaussian process defined by independent and Gaussian increments

\begin{equation}
    W_{t + u} - W_{t} \sim N(0, u)
\end{equation}

and 

\begin{equation}
    W_0 = 0
\end{equation}

The covariance function is defined in $W$ for $W_s$ and $W_t$, $t \geq s$ as 

\begin{equation}
    \textrm{min}\{s, t\}
\end{equation}

, for Wiener process with \textit{unit} variance

Because the only hyperparameter for the Wiener process is the variance, which can be considered as unit variance by scaling factor in Algorithm 1, a selection of Wiener processes as statistical model in Algorithm 2 requires no expert interference.

UCB2 or EI2 algorithm might not offer a tight bounds for (discrete) Wiener process. However, by lower bound theory there must exist an algorithm where the non-asymptotic normregret upper bound matches to normregret lower bound at least up to a constant factor. One of such algorithms is \textit{Pure random search} for standard -- continuous Wiener process \cite{calvin}. For a standard Wiener process, one can easily modify Algorithm 1 to reduce a combinatorial optimization problem into continuous domain function as $X = [1, D]^d$, $d = 1$.

This implies that by considering Gaussian processes where the only hyperparameter of the model is a scaling factor, one can use the Algorithms 1 and 2 for the traditional optimization.

In Wiener process' case, the approximation ratio is based on the function value distribution under the Wiener measure.

\section{Conclusions}

In this paper we have shown that with a type of human-in-the-loop Bayesian optimization one can solve any combinatorial optimization problems (either minimization or maximization) in polynomial time as long as the combinatorial complexity of the problem is $O(2^n)$ and we can assume that expert can correctly provide $O(n)$ Gaussian process priors. While our proposed algorithm is merely theoretical and has only little practical interest, we assume that our approach could be used to gain new insights on how to tackle the hardest NP-hard combinatorial problems in the future.

As it is usually expected, most of combinatorial optimization problems might be intractable. Our research casts new light how to avoid the common pitfalls faced while solving these problems: to involve human expertise in the optimization process. In our approach, the human expert is not required to know where the optimizers lie in the search space. Instead, the expert is only assumed to provide a Gaussian prior when the expert is queried from. The prior captures expert's opinions on how drastically the nearby values might change in the function and whether the changes are periodic and so forth.

The expert knowledge can be supported with previously calculated values of the objective function and, for instance, Maximum likehood estimation \cite{16}. \cite{16} gave bounds on \textit{sample sizes} of random variables (in our case, the function values) to derive a certain probability in optimization of a likehood function for covariance hyperparameters. Because the bounds do not depend on the domain size of the function, fixing a sample size and a covariance function for covariance matrix $\Sigma$ yields a constant time estimation of a prior. This can be combined with the expert knowledge.

Our results indicate that in the future, in order to understand the fundamental limits of our tools (whether it is a combinatorial optimization algorithm or an AI system) a human interference might be required. The scope of possible comibinatorial problems that can be solved efficiently in theory through our human--algorithm collaboration procedure include enormous amount of combinatorial problems. Such problems include \textit{protein structure prediction, finding maximum sized cliques, model checking, automated theorem proving} (basically every problem in NPO and even beyond to problems with super-exponential combinatorial complexity).

\end{document}